\newcommand{\Rmnum}[1]{\expandafter\@slowromancap\romannumeral #1@}
\newtheorem{theorem}{Theorem}[section]
\newenvironment{proof}[1][Proof]{\begin{trivlist}
\item[\hskip \labelsep {\bfseries #1}]}{\end{trivlist}}
\newcommand{\qed}{\nobreak \ifvmode \relax \else
      \ifdim\lastskip<1.5em \hskip-\lastskip
      \hskip1.5em plus0em minus0.5em \fi \nobreak
      \vrule height0.75em width0.5em depth0.25em\fi}
\begin{document}


\title{Quantum key distribution with limited classical Bob}


\author{Zhi-Wei Sun$^1$}
\email[]{sunzhiwei1986@gmail.com}
\author{Rui-Gang Du$^1$}
\email{Duruigang@yahoo.com.cn}
\author{Dong-Yang Long$^1$}
\email{issldy@mail.sysu.edu.cn}

\affiliation{
$^{1}$School of Information Science and Technology, Sun Yat-sen University, 510006, P.R.China
}

\date{\today}

\begin{abstract}
Two QKD protocols with limited classical Bob who is restricted to only preparing a qubit in the classical basis and sending it or doing nothing are presented, and they are proved completely robust. As limited classical Bob can deterministically prepare the qubit, we exploit this feature to construct a quantum secure direct communication protocol with limited classical Bob, which is the direct communication of secret messages without first producing a shared secret key.
\end{abstract}

\pacs{}

\maketitle

\section{\Rmnum{1}. INTRODUCTION}
Since the pioneering work of Bennett and Brassard (BB84) \cite{BB84}, quantum key distribution (QKD) has been developed quickly and has started approaching maturity, ready for implementation in realistic setting \cite{2009Scarani}. QKD mainly realizes the transmission of secret messages, which is an important task in cryptography; and the security of the QKD does not depend on the intractability of mathematical task, which can be solved efficiently if quantum computer could be available someday, but the general principles of quantum mechanics. The quantum mechanics can guarantee the security of the protocols without imposing any restriction of the power of the eavesdropper.
However, previous quantum key distribution protocols require all parties to have quantum capabilities which may be unpractical in various applications because not all the participants can afford expensive quantum resources and quantum operations. So the question of how "quantum" a protocol should be in order to achieve a significant advantage over all classical protocol is of great interest \cite{2007Boyer}.
In 2007, Boyer et al. presented, for the first time, a protocol in which one party (Bob) is classical \cite{2007Boyer} to answer this question in the field of quantum cryptography. We call such a protocol BKM2007 for short. This "semi-quantum" protocol has advantages over fully quantum protocol because it is easier to implement in practice, and maintains all the advantages of the original system. Now, there have been some of the results in the area of semi-quantum protocols \cite{2009Boyer,2009Zou,2010Li}.

The setting of BKM2007 is as follow \cite{2007Boyer}: Alice and Bob have labs that are perfectly secure; they use qubits for their quantum communication; they have access to an authenticated public classical communication channel; a quantum channel leads from Alice's lab to the outside world and back to her lab; Bob can access a segment of the channel, and whenever a qubit passes through that segment Bob can either let it go undisturbed or $(\overline{1})$ measure the qubit in the fixed orthogonal basis set $\{|0\rangle, |1\rangle\}$ which is also called classical, and $(\overline{2})$ prepare a (fresh) qubit in the classical basis and send it.
Bob is classical if he is limited to performing only operations $(\overline{1})$ and $(\overline{2})$ or doing nothing, and he cannot obtain any quantum superposition of the two states in the classical basis.

To answer the question of how "quantum" a protocol should be in order to achieve a significant advantage over all classical protocol.
It is natural to ask whether there exists a high-secure protocol in which classical Bob is limited to perform only operation $(\overline{2})$ or do nothing, we call such Bob limited classical Bob for convenience.
There are four reasons why we consider the limited classical Bob. First of all, it gives an answer to the question of how "quantum" a protocol should be in order to achieve a significant over all classical protocol. Second, not all the participants can afford expensive quantum operations, for example, measuring the quantum qubits using qubit detector. So the protocol with limited classical Bob may be more practical in various applications. Third, in principle, quantum key distribution offers unconditional security where assumptions are made for the devices involved. However, in practical implementations the components deviate from the models in the security proofs. Eve could exploit imperfections in Alice's or Bob's equipment (such as source or detectors) remotely to acquire their secret key. Several such attacks have been proposed \cite{2007Fung,2008Zhao}. Recently, it has been experimentally shown that the detectors of two commercially available QKD systems can be fully remote-controlled using specially tailored bright illumination, which make it possible to tracelessly obtain the full secret key \cite{2010Lydersen}. The protocol with limited classical Bob who is limited not using the detectors may be secure against this implementation-dependent attack. So it is more secure to realize and be favored in some specific applications. Last,
the semi-quantum key distribution protocols \cite{2007Boyer,2009Boyer,2009Zou} are usually nondeterministic, i.e., Alice can encode a classical bit into a quantum state, which is then sent to Bob, but she cannot determine the bit value that Bob will finally decode. But such nondeterministic communication can be used to establish a shared secret key between Alice and Bob, just as done in quantum key distribution protocol. However, the protocol with limited classical Bob can also be used to deterministically send message through quantum channel, which is also known as quantum secure direct communication \cite{2002首篇Lett.89.187902}.

In this paper, we present two QKD protocols with limited classical Bob, and they also allow secure direct communication. Furthermore, we prove that all the protocols are completely robust. Robustness of a protocol means that any adversarial attempt to learn some information on the key necessarily induces some disturbance \cite{2007Boyer}.
To prove a protocol being robust is an important step in studying security, and Boyer et al \cite{2007Boyer} particularly divided robustness into three classes: completely robust, partly robust and completely nonrobust. A protocol is said to be completely robust if nonzero information acquired by Eve on the INFO string (before Alice and Bob perform the ECC step) implies nonzero probability that the legitimate participants find errors on the bits tested by the protocol. A protocol is said to be partly robust if Eve can acquire some limited information on the INFO string without inducing any error on the bits tested by the protocol. A protocol is said to be completely nonrobust if Eve can acquire the INFO string without inducing any error on the bits tested by the protocol. Partly robust protocols could still be secure, yet completely nonrobust protocols are automatically proven insecure \cite{2007Boyer}. As one example in Ref. \cite{2007Boyer}, BB84 is completely robust when qubits are used by Alice and Bob but it is only partly robust if photon pulses are used and sometimes two-photon  pulses are sent.

This paper is organized as follows. In sec. \Rmnum{2} and sec. \Rmnum{3}, we present two QKD protocols with limited classical Bob. In section. \Rmnum{4}, a quantum secure direct communication protocol is proposed. And finally, we give the conclusion.
\section{\Rmnum{2}. SQKD PROTOCOL IN WHICH ALICE SENDS FOUR QUANTUM STATES TO LIMITED CLASSICAL BOB}
We follow the BKM2007 protocol's idea\cite{2007Boyer} to construct an SQKD protocol described in the following, but the classical Bob is restricted to performing only operation $(\overline{2})$ or doing nothing.
\subsection{SQKD Protocol 1: Alice sends four quantum states to limited classical Bob}
Let the integer $n$ be the desired length of the INFO string, and let $\delta >0$ be some fixed parameter.

$(1)$ Alice randomly creates $N=8n(1+\delta)$ qubits, each of which is either in the computational ("Z") basis $\{|0\rangle, |1\rangle\}$ or in the diagonal ("X") basis $\{|+\rangle, |-\rangle\}$, where $|+\rangle =\frac{1}{\sqrt{2}}(|0\rangle + |1\rangle)$, $|-\rangle = \frac{1}{\sqrt{2}}(|0\rangle - |1\rangle)$. And Alice sends the $N$ qubits to Bob. After Alice sends the first qubit, she sends a qubit only after receiving the previous one \cite{2007Boyer}.

$(2)$ When each qubit arriving, Bob chooses randomly either to reflect it (CTRL) or randomly prepare a fresh qubit in the classical basis and resend it (SIFT). Bob resends a qubit immediately after receiving it.

$(3)$ Alice measures each qubit in the basis she sent it.

$(4)$ Alice announces which were her $Z$ bits and Bob publishes which ones he chose to SIFT. It expected that approximate $\frac{N}{4}$ bits, Alice used the $Z$ basis for transmitting and Bob chose to SIFT; these are the SIFT bits, which form the sifted key. They abort the protocol if the number of SIFT bits is less than $2n$; this happens with exponentially small probability.

$(5)$ Alice checks the error rate on the CTRL bits (CTRL bits denote the bits produced by the process that Bob chooses to CTRL). If either the $Z-CTRL$ (Alice used the $Z$ basis and Bob chose CTRL) error rate or the $X-CTRL$ (Alice used the $X$ basis and Bob chose CTRL) error rate is higher than some predefined threshold $P_{CTRL}$, she and Bob abort the protocol.

$(6)$ Bob chooses at random $n$ sifted key to be TEST bits. He publishes which are the chosen bits. Alice publishes the value of these TEST bits. Bob check the error rate on the TEST bits and if it is higher than some predefined threshold $P_{TEST}$, they abort the protocol.

$(7)$ Alice and Bob select the first $n$ remaining sifted key to be used as INFO bits (INFO string).

$(8)$ Bob announces error correction code (ECC) and privacy amplification (PA) data; he and Alice use them to extract the $m$-bit final key from the $n$-bit INFO string.

\begin{theorem}
The SQKD Protocol $1$ is completely robust.
\end{theorem}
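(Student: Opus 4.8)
The plan is to follow the robustness framework of Boyer \emph{et al.} \cite{2007Boyer}: model the most general attack Eve can mount on the two-way quantum channel, impose the requirement that this attack disturbs \emph{none} of the bits the protocol tests, and show that this forces Eve's probe to end in a state completely uncorrelated with the INFO string. Since ``completely robust'' is the contrapositive of this statement, establishing ``zero information under zero disturbance'' is exactly what is needed.

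First I would fix the attack model. Because the channel runs from Alice out to Bob and back, Eve acts twice on each travelling qubit: a unitary $E$ on the forward leg (Alice $\to$ Bob) and a unitary $F$ on the backward leg (Bob $\to$ Alice), both acting jointly on the qubit and on a probe initialized to a fixed state $|\chi\rangle$. Writing $E|0\rangle|\chi\rangle=|0\rangle|\chi_{00}\rangle+|1\rangle|\chi_{01}\rangle$ and $E|1\rangle|\chi\rangle=|0\rangle|\chi_{10}\rangle+|1\rangle|\chi_{11}\rangle$ records the forward action, and $F$ is decomposed similarly. On CTRL rounds Bob applies the identity, so the qubit Alice re-measures has undergone the composite $FE$; on SIFT rounds Bob discards the incoming qubit and injects a fresh $|b\rangle$, so the returning qubit is $F(|b\rangle\otimes\rho)$, where $\rho$ is Eve's probe after the forward leg and is independent of $b$.

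Next I would extract the constraints imposed by ``no error.'' On CTRL bits, zero $Z$-error forces $FE$ to preserve $|0\rangle$ and $|1\rangle$, and zero $X$-error forces it to preserve $|+\rangle$ and $|-\rangle$; by linearity these four conditions collapse to $FE\,|\psi\rangle|\chi\rangle=|\psi\rangle|f\rangle$ for every qubit state $|\psi\rangle$ and a single probe state $|f\rangle$. Hence on the reflected states the round trip is probe-decoupling, so Eve learns nothing from CTRL rounds, and feeding this back into the decompositions constrains how $F$ acts on the images $E|\psi\rangle|\chi\rangle$. On SIFT bits, zero TEST error forces $F$ not to flip the freshly prepared $|b\rangle$, i.e. $F$ is block-diagonal in the $Z$-basis on the relevant probe subspace. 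Combining the two families of constraints should force the cross terms $|\chi_{01}\rangle,|\chi_{10}\rangle$ to vanish and, after $F$, leave Eve's probe in the same reduced state for $b=0$ and $b=1$ (averaged over Alice's forward value), so that it is independent of the sifted-key bit and hence of the INFO string.

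The main obstacle is precisely this last combination. The CTRL conditions control $F$ only on the entangled states $E|\psi\rangle|\chi\rangle$ that return on reflection, whereas the key-carrying SIFT rounds feed $F$ the genuinely \emph{different} input $|b\rangle\otimes\rho$---a fresh classical qubit tensored with the post-forward probe. Bridging this gap, namely showing that the probe-decoupling forced on the reflected states necessarily transfers to the freshly prepared states despite the two-way channel allowing Eve to entangle on the forward leg and attempt to undo it only on reflection, is the delicate heart of the argument, and is where the interplay between the forward operator $E$ and the backward operator $F$ must be handled with the greatest care.
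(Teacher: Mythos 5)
Your attack model and your two constraint families are correctly derived, but the step you defer to the end --- showing that the constraints force Eve's probe to be independent of Bob's bit $b$ --- is not merely ``delicate'': it is false, so the proof cannot be completed along these lines (or any lines), because the theorem itself does not hold for this protocol. Work out what your own constraints leave open. Zero TEST error makes $F=|0\rangle\langle0|\otimes U_0+|1\rangle\langle1|\otimes U_1$ a controlled unitary, and zero CTRL error then forces $E\,|a\rangle|\chi\rangle=|a\rangle\otimes U_a^{\dagger}|f\rangle$ for $a\in\{0,1\}$ (your cross terms $|\chi_{01}\rangle,|\chi_{10}\rangle$ do vanish); but nothing forces $U_0=U_1$. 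Concretely, take a qutrit probe with orthonormal states $|e_0\rangle,|e_1\rangle,|e_2\rangle$, initial state $|\chi\rangle=|e_0\rangle$, let $W$ be the cyclic shift $|e_0\rangle\mapsto|e_1\rangle\mapsto|e_2\rangle\mapsto|e_0\rangle$, and let Eve apply $E=|0\rangle\langle0|\otimes I+|1\rangle\langle1|\otimes W$ on the forward leg and $F=|0\rangle\langle0|\otimes I+|1\rangle\langle1|\otimes W^{\dagger}$ on the backward leg. On CTRL rounds $FE=I\otimes I$, so there is never an error in either basis and the probe decouples; on SIFT rounds $F$ never flips the fresh qubit, so Alice receives exactly $|b\rangle$ and every TEST bit passes. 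Yet Eve's probe ends in $|e_0\rangle$ if $a=b$, in $|e_1\rangle$ if $(a,b)=(1,0)$, and in $|e_2\rangle$ if $(a,b)=(0,1)$, where $a$ is Alice's (secret, random) forward bit. Measuring in this basis, Eve learns the INFO bit $b$ with certainty on the roughly half of the positions where $a\neq b$, while inducing zero disturbance on everything the protocol tests. Hence the protocol is at best partly robust, and the stated theorem is false; the ``probe same for $b=0$ and $b=1$ averaged over $a$'' claim in your sketch fails exactly here, since the averaged states $\frac{1}{2}(|e_0\rangle\langle e_0|+|e_1\rangle\langle e_1|)$ and $\frac{1}{2}(|e_0\rangle\langle e_0|+|e_2\rangle\langle e_2|)$ are distinguishable.

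It is worth seeing why this cannot happen in BKM2007 \cite{2007Boyer}, since that is also where the paper's own proof goes wrong. There, Bob measures and resends, so the bit entering $F$ equals Alice's forward bit $a$ (otherwise TEST fails); Eve's backward unitary $U_a$ then automatically cancels her forward $U_a^{\dagger}$, and the probe always returns to $|f\rangle$. Here Bob cannot measure, $b$ is independent of $a$, nothing ever checks that the forward qubit reached Bob intact, and with probability $\frac{1}{2}$ Eve's probe picks up $U_bU_a^{\dagger}\neq I$. The paper's proof is a one-sentence reduction asserting the argument is ``identical'' to BKM2007 because fresh qubits are ``prepared in a random manner in both cases''; that reduction glosses over precisely the dependence of the resent bit on $a$ and is invalid for the same reason your final step is. Your sketch at least localizes the difficulty at the right place, but the difficulty is fatal rather than technical.
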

\begin{proof}
The difference between this protocol and BKM2007 protocol is that for the process of SIFT, no measurement is made and a fresh qubit is randomly prepared in the classical basis, whereas a fresh qubit is prepared in the classical basis according to the measurement result in the BKM2007 protocol. However, the fresh qubits are prepared in a random manner in both cases. The security of BKM2007 is assured by means of Alice and Bob checking the error rate of CTRL bits and TEST bits. In fact the proof of complete robustness of SQKD Protocol $1$ is identical to that in the BKM2007 protocol. Because the BKM2007 has been proven completely robust \cite{2007Boyer}, the complete robustness of SQKD Protocol $1$ is assured.
\end{proof}
\section{\Rmnum{3}. SQKD PROTOCOL IN WHICH ALICE SENDS ONLY ONE QUANTUM STATE TO LIMITED CLASSICAL BOB}
In 2009, a nice extension of BKM2007 was suggested by Zou et al. \cite{2009Zou}, which suggests that it is sufficient for the originator of the states (the person holding the quantum technology) to generate just one state. We call such a protocol ZQLWL2009 for short. Fortunately we can also follow the ZQLWL2009's idea of SQKD to construct an SQKD protocol with limited classical Bob in which Alice sends only one quantum state described as follows.
\subsection{SQKD Protocol 2: Alice sends one quantum state to limited classical Bob}
$(1)$ Alice creates and sends $N$ qubits $|+\rangle^{N}$, where $N=8n(1+\delta)$, $n$ be the desired length of the INFO string, and $\delta >0$ be a fixed parameter. After Alice sends the first qubit, she sends a qubit only after receiving the previous one.

$(2)$ When each qubit arriving, Bob chooses randomly either to reflect it (CTRL) or randomly prepare a fresh qubit in the classical basis and resend it (SIFT). Bob resends a qubit immediately after receiving it.

$(3)$ Alice randomly measures each qubit either in the $Z$ basis or in the $X$ basis.

It is expected that for approximately $\frac{N}{4}$ bits, Bob chooses to SIFT and Alice measures in the $Z$ basis; these are the SIFT bits, which form the sifted key. For approximately $\frac{N}{4}$ bits, Bob chooses to CTRL and Alice measure in the $X$ basis; we refer to these bits as $CTRL-X$.

$(4)$ Alice announces which basis she chooses to measure the qubits and Bob publishes which ones he chose to SIFT. They check the number of sifted key. They abort the protocol if the number of sifted key is less than $2n$.

$(5)$ Alice checks the error rate on the $CTRL-X$. If the error rate of $CTRL-X$ is higher than some predefined threshold $P_{t}$, she and Bob abort the protocol.

$(6)$ Bob chooses at random $n$ sifted key to be TEST bits. He publishes which are the chosen bits. Alice publishes the value of these TEST bits. Bob checks the error rate on the TEST bits and if it is higher than some predefined threshold $P_{TEST}$, the protocol abort.

$(7)$ Alice and Bob select the first $n$ remaining sifted key to be used as INFO bits.

$(8)$ Bob announces ECC and PA data; he and Alice use them to extract the $m$-bit final key from the $n$-bit INFO string.
\begin{theorem}
The SQKD Protocol 2 is completely robust.
\end{theorem}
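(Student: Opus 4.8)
The plan is to prove complete robustness of SQKD Protocol 2 by analyzing the most general collective attack Eve can mount and showing that any attack leaking nonzero information about the INFO string must leave a detectable signature on the $CTRL\text{-}X$ bits or the TEST bits. Because the qubit travels from Alice to Bob and back, Eve may intercept it twice: once on the forward path (before Bob acts) and once on the return path (after Bob acts). I would model this by two unitaries $U_F$ and $U_B$, each acting jointly on the travelling qubit and Eve's ancilla, and write out the global state $U_B\,(\mathcal{O}_{\mathrm{Bob}}\otimes I)\,U_F\,|+\rangle|E\rangle$, where $\mathcal{O}_{\mathrm{Bob}}$ is either the identity (CTRL) or the SIFT operation of freshly preparing a random classical-basis qubit.

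First I would treat the CTRL runs. Here Bob returns the qubit untouched, so Alice receives $U_B U_F|+\rangle|E\rangle$ and, on the $CTRL\text{-}X$ subset, measures in the $X$ basis expecting the outcome $|+\rangle$. The key step is to impose the zero-disturbance condition: for complete robustness I must show that if Eve induces \emph{no} error on $CTRL\text{-}X$ (probability of the $|-\rangle$ outcome equal to zero), then the combined forward-backward action on $|+\rangle$ must factor as $|+\rangle\otimes|f\rangle$ for some fixed ancilla state $|f\rangle$ independent of the system; that is, Eve's probe decouples from the qubit on every CTRL run. This is the standard argument that a no-disturbance constraint on a known input state forces the joint evolution to leave that state unentangled with the environment.

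Next I would carry this decoupling conclusion over to the SIFT runs, which is the crux of the proof. On a SIFT run Bob discards the incoming qubit and sends a fresh random classical-basis state; since Protocol 2 (unlike Protocol 1) has Alice always send $|+\rangle$, Eve cannot distinguish CTRL from SIFT runs on the forward path, so the forward interaction $U_F$ is the same on both. I would argue that the decoupling forced by the CTRL analysis on $U_F$ (and the corresponding constraint on $U_B$ acting on the classical basis states Bob sends back) means Eve's ancilla ends in a state independent of Bob's random SIFT bit. Consequently, after Alice's $Z$-basis measurement yields the sifted key, Eve's probe carries no correlation with the INFO string, establishing that zero disturbance implies zero information. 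Finally I would invoke the close structural parallel with ZQLWL2009: since that protocol was already proved completely robust and our only modification is that limited classical Bob prepares the SIFT qubit at random rather than according to a measurement outcome—while in both cases the returned SIFT qubit is a uniformly random classical-basis state—the error-rate checks on $CTRL\text{-}X$ and TEST bits constrain Eve identically, so the robustness proof transfers verbatim.

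The main obstacle I anticipate is the two-sided nature of the attack: because Eve touches the qubit on both legs, I must be careful that the forward interaction $U_F$ on a SIFT run cannot secretly correlate with Bob's later random choice (which Eve has not yet seen), and that the backward interaction $U_B$ on a fresh random classical state cannot extract information without disturbing the CTRL returns. Handling the interplay of $U_F$ and $U_B$ under the single constraint that $CTRL\text{-}X$ is error-free—and showing this single constraint suffices because Alice always sends $|+\rangle$—is where the argument requires the most care.
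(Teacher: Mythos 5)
Your analysis of the CTRL runs is correct: zero error on $CTRL\text{-}X$ forces $U_B U_F|+\rangle|E\rangle = |+\rangle\otimes|f\rangle$. The fatal gap is the very next step, where you claim this forces a decoupling of $U_F$ itself and hence of Eve's ancilla from Bob's SIFT bit. The CTRL condition constrains only the \emph{composition} $U_B U_F$ on the \emph{single} input $|+\rangle|E\rangle$; it constrains neither $U_F$ alone nor the action of $U_B$ on the inputs that actually occur in SIFT runs, namely $|b\rangle$ tensored with Eve's post-forward ancilla. Because limited classical Bob never measures, his bit $b$ is statistically independent of Eve's forward probe, so the ``off-diagonal'' inputs $|b\rangle\otimes|e_{1-b}\rangle$ genuinely occur and are untouched by both checks. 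Concretely, the implication you are trying to prove (zero disturbance $\Rightarrow$ zero information) is \emph{false} for this protocol: let Eve apply forward a CNOT copying the computational basis into her ancilla, $U_F|+\rangle|E\rangle = (|0\rangle|e_0\rangle+|1\rangle|e_1\rangle)/\sqrt{2}$, and backward the unitary $U_B$ defined by $|0,e_0\rangle\mapsto|0,f\rangle$, $|1,e_1\rangle\mapsto|1,f\rangle$, $|0,e_1\rangle\mapsto|0,x\rangle$, $|1,e_0\rangle\mapsto|1,y\rangle$ with $f,x,y$ orthonormal. On CTRL runs Alice receives exactly $|+\rangle|f\rangle$, so the $CTRL\text{-}X$ error rate is zero; on SIFT runs the $Z$ value of Bob's fresh qubit is preserved, so the TEST error rate is zero; yet Eve's residual state is $(|f\rangle\langle f|+|x\rangle\langle x|)/2$ when $b=0$ and $(|f\rangle\langle f|+|y\rangle\langle y|)/2$ when $b=1$, which she can distinguish with probability $1/2$ per INFO bit. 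No repair of your sketch can get around this counterexample.

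This also explains why your final move---transferring the ZQLWL2009 robustness proof ``verbatim''---fails, and why the difference you dismiss as cosmetic is in fact the whole content of the theorem. In ZQLWL2009 and in Ref.~\cite{2011Comment} (which is, incidentally, the entirety of the paper's own proof: a one-sentence citation), classical Bob \emph{measures} the incoming qubit and resends a qubit matching his outcome; that measurement collapses Eve's forward probe onto the branch $|e_b\rangle$ correlated with the very bit Bob sends back, so only the ``diagonal'' inputs $|b,e_b\rangle$ ever reach $U_B$, and the CTRL constraint then forces Eve's residual states for $b=0$ and $b=1$ to coincide. Replacing ``measure and resend'' by ``send a fresh random bit'' destroys exactly this correlation. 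Your attempt is more explicit than the paper's proof, but carried out honestly it refutes the stated theorem rather than proving it; the paper's citation glosses over the same point.
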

\begin{proof}
It is similar to the proof of complete robustness of the protocol in Ref. \cite{2011Comment}.
\end{proof}

\section{\Rmnum{4}. QUANTUM SECURE DIRECT COMMUNICATION PROTOCOL WITH LIMITED CLASSICAL BOB }
One might see a new feature in our protocol (with limited classical Bob) over some protocols (for example, \cite{2007Boyer}) in the fact that the bits are not random but can be chosen deterministically by classical Bob. Having this properties, the protocol is not restricted to key distribution only; it can be used for quantum secure direct communication \cite{2002首篇Lett.89.187902}, which is the direct communication of secret messages without first producing a shared secret key.

At first glance, it seems to be a simple work to constitute a quantum secure direct communication (QSDC) protocol following the idea of SQKD protocol with limited classical Bob. However, it is completely insecure because Eve can steal the secret message with man-in-the-middle attack strategy. Eve stores all the qubits coming back from classical Bob and resends randomly forged qubits to Alice. After Bob announces which qubit he chose to SIFT, Eve measures qubits of SIFT in the computational basis and obtains the secret message. Fortunately, the method used in Ref. \cite{1995确定分发} can be used in our protocol to against the man-in-the-middle attack, and quantum register is used. Let us give an explicit description of the protocol as follows.

$(1)$ Alice and Bob agree that $|0\rangle$ and $|1\rangle$ represent $0$ and $1$ respectively.

$(2)$ In order to transmit a message of some length $n$, Bob constructs a longer string: some extra bits are used for estimating the error rate (hence, the maximal information eavesdropped by Eve) and some for redundancy, which is used (via block coding) to encode the $n$-bit message. We denote the longer string as $m$ for convenience, and the length of $m$ is $|m|$. Bob notices Alice that he wants to send a secret message to her.

$(3)$  Alice generates random $N= 2 |m|(1+\delta)$ qubits either in the computational basis $\{|0\rangle, |1\rangle\}$ or in the diagonal basis $\{|+\rangle, |-\rangle\}$, and she sends them to classical Bob. After Alice sends the first qubit, she sends a qubit only after receiving the previous one.

$(4)$  When each qubit arriving, Bob chooses randomly either to reflect it (CTRL) or prepare a fresh qubit in the classical basis according to the message $m$ and resend it (SIFT). Bob resends a qubit immediately after receiving it.

$(5)$ Alice uses an $N$-qubit register to save all qubits coming back from Bob.

$(6)$ After verifying that Alice has received all $N$ qubits, Bob announces which ones he chose to SIFT.

$(7)$ Alice measures each qubit of CTRL in the basis she sent it and each qubit of SIFT in the computational basis. She checks the error rate on the CTRL bits. If either the $Z-CTRL$ (Alice used the $Z$ basis and Bob chose CTRL) error rate or the $X-CTRL$ (Alice used the $X$ basis and Bob chose CTRL) error rate is higher than some predefined threshold $P_{CTRL}$, she and Bob abort the protocol.

$(8)$ Bob tells Alice which bits were used for error estimation on the SIFT bits. If Alice, estimating the error rate, detects Eve, she prevents public announcement of the block-coding function by informing Bob. Thus the secret message can be transmitted with an exponentially small probability of errors and exponentially small information leakage.
\begin{theorem}
The QSDC protocol with limited classical Bob is completely robust.
\end{theorem}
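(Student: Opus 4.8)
The plan is to reduce the claim to the same disturbance-versus-information mechanism that underlies the complete robustness of SQKD Protocol~$1$, while separately checking that the $N$-qubit register introduced in step~$(5)$ closes the man-in-the-middle loophole created by the fact that the SIFT qubits now carry a \emph{deterministic} message. First I would fix Eve's most general coherent attack as a pair of unitaries: a forward operation $U_F$ that couples each outgoing qubit to a probe prepared in $|E\rangle$, and a backward operation $U_B$ applied after the qubit re-enters the channel from Bob. Two structural facts constrain her. Since Bob does not reveal the SIFT positions until step~$(6)$, Eve must apply the \emph{same} $(U_F,U_B)$ to every qubit; and since Alice commits all $N$ returning qubits to her register in step~$(5)$, \emph{before} that announcement, whatever Eve hands to Alice is fixed before she knows which positions are CTRL and which are SIFT. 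The register is thus exactly what forbids the store-and-forward strategy sketched before the theorem, in which Eve would withhold the returning qubits, wait for the announcement, and only then measure the SIFT qubits in the $Z$ basis: that strategy requires deferring her action past the announcement, which the commitment in step~$(5)$ disallows.

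Next I would extract the constraints imposed by a vanishing CTRL error rate. For a CTRL qubit the round trip is $U_B\,(\mathrm{reflect})\,U_F$ acting on $|\psi\rangle|E\rangle$, and demanding zero error for both the $Z$-CTRL and the $X$-CTRL tests forces, by the usual linearity argument over the four states $\{|0\rangle,|1\rangle,|+\rangle,|-\rangle\}$, the transparency relation $U_B U_F\,|\psi\rangle|E\rangle=|\psi\rangle|f\rangle$ with the probe state $|f\rangle$ \emph{independent} of $\psi$. This is the step inherited essentially verbatim from the BKM2007 analysis \cite{2007Boyer}, and it already leaves Eve's probe decoupled from every reflected qubit.

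I would then argue that this same constraint denies Eve any information about the message. The decisive observation is that Bob encodes $m$ only on the \emph{fresh} qubit he emits on a SIFT round, a qubit that is created after the forward pass; consequently the forward attack $U_F$ touches only Alice's random preparation and never the message, so Eve's sole opportunity to read $m_i$ is through $U_B$. But to learn $m_i$ while leaving Alice's $Z$-measurement of that SIFT qubit intact, so as to avoid a TEST error, Eve would need $U_B$ to correlate her probe with the computational value of the backward qubit, i.e.\ to make the emerging probe state depend on that value. This is precisely what the common-$|f\rangle$ condition extracted from the $X$-CTRL test forbids, because the very same $U_B$ acts on the reflected superpositions. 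Hence zero CTRL and zero TEST error force the probe to be uncorrelated with the SIFT values, i.e.\ zero information on $m$.

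The step I expect to be the main obstacle is making the previous paragraph airtight in the presence of a \emph{nontrivial} forward attack: when $U_F\neq I$ the probe is already entangled with Alice's (discarded) qubit before the fresh SIFT qubit even exists, so one must verify that this pre-entanglement cannot be combined with the unconstrained part of $U_B$ to smuggle out the computational value of the fresh qubit without spoiling the common-$|f\rangle$ transparency on the reflected states. Equivalently, one must rule out every coherent refinement of the store-and-forward attack in which Eve uses entanglement between the withheld backward qubits and the replacements she commits to Alice. Once this is settled, the proof concludes by contraposition exactly as in Ref.~\cite{2007Boyer}: any attack yielding nonzero information on the INFO (message) string must either break the round-trip transparency, producing $Z$-CTRL or $X$-CTRL errors, or tamper with the committed SIFT qubits, producing TEST errors; so nonzero information implies a nonzero probability of a detected error, which is complete robustness.
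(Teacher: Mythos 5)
There is a genuine gap, and it sits exactly where you suspected. The lemma you state in your third paragraph --- that the common-$|f\rangle$ transparency condition forbids $U_B$ from correlating the probe with the computational value of the backward qubit --- is false as stated. Take $U_F=U_B=\mathrm{CNOT}$ with the travelling qubit as control and Eve's probe (initialized to $|0\rangle$) as target. The round trip $U_B U_F$ is the identity on all four states $\{|0\rangle,|1\rangle,|+\rangle,|-\rangle\}$, so every $Z$-CTRL and $X$-CTRL test passes with zero error, and the SIFT qubit's computational value is never disturbed, so the error estimation on SIFT bits also shows zero error. Yet on a SIFT round where Alice prepared $|a_i\rangle$ in the $Z$ basis, Eve's probe after the backward pass is exactly $|a_i\oplus b_i\rangle$: the probe \emph{does} depend on Bob's message bit $b_i$. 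The reason is that the transparency relation only constrains $U_B$ on the two-dimensional subspace $U_F(\mathcal{H}_2\otimes|E\rangle)$, whereas on a SIFT round the input to $U_B$ is a fresh qubit tensored with a probe that is entangled with the original qubit withheld in Bob's lab, and this input generally lies outside the constrained subspace. What actually saves the protocol is that the correlation is with $a_i\oplus b_i$ and Alice's preparation $a_i$ is uniformly random and never announced --- an argument your proposal defers to ``the main obstacle'' and never supplies. So your concluding contraposition rests on a lemma that is not merely unproven but wrong in the form you give it; the correct statement needs this additional secrecy-of-$a_i$ argument.

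Separately, you mischaracterize the defense against the man-in-the-middle attack and omit the ingredient the paper's proof actually turns on. The register in step $(5)$ does not ``disallow'' the store-and-forward strategy: Eve may still withhold all of Bob's returning qubits, hand Alice forged ones before the announcement, and measure the withheld SIFT qubits after step $(6)$ --- she then learns the entire raw string $m$. The protocol survives because (i) the forged qubits produce errors in the CTRL check of step $(7)$ and the SIFT error estimation of step $(8)$, so this attack is detected with probability exponentially close to one, and (ii) upon detection Alice prevents Bob from publicly announcing the block-coding function, so the raw string in Eve's hands does not determine the $n$-bit message (this is Goldenberg and Vaidman's delayed-announcement mechanism built into steps $(2)$ and $(8)$). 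The paper's own proof consists of precisely these two observations plus a reduction of all remaining attacks to the robustness of SQKD Protocol~$1$ (hence to Ref.~\cite{2007Boyer}). Because your proof never mentions the block-coding function, even if your unitary analysis were completed it would only establish robustness with respect to the raw string $m$; it would not address why Eve's detected-but-successful acquisition of $m$ fails to compromise the transmitted message, which is the point of the theorem in the direct-communication setting where, unlike QKD, the data cannot be discarded after the fact.
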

\begin{proof}
It is straightforward by the proof of SQDK Protocol $1$ that Alice can get the longer string $m$ securely. If Eve uses the man-in-the-middle attack described above, Alice estimating the error rate in the step $(8)$, will detect Eve, and she informs Bob. Then Bob will stop public announcement of the block-coding function. Eve won't get the $n$-bit message without knowing the block-coding function.
\end{proof}
\section{\Rmnum{5}. CONCLUSION}
In this paper, we presented two QKD protocols with limited classical Bob who performs only limited classical operations (preparing a (fresh) qubit in the classical basis and send it or doing nothing) and proved its robustness. The presented two protocols have the feature that the bits can be chosen deterministically by classical Bob. Using this feature, we also constructed a quantum secure direct communication. These give a new answer to how much "quantumness" is required in order to perform classically impossible tasks in general \cite{2007Boyer}.
\section{\Rmnum{6}. ACKNOWLEDGMENTS}
This work is in part supported by the Key Project of NSFC-Guangdong
Funds (No.U0935002).

\end{document}